\newtheorem{theorem}{Theorem}
\newtheorem{corollary}{Corollary}
\newtheorem{lemma}{Lemma}
\def\BState{\State\hskip-\ALG@thistlm}
\algnewcommand{\Or}{\textbf{or}\,}
\algnewcommand\algorithmicswitch{\textbf{switch}}
\algnewcommand\algorithmiccase{\textbf{case}}
\definecolor{darkblue}{rgb}{0,0,.6}
\definecolor{darkred}{rgb}{.7,0,0}
\definecolor{darkgreen}{rgb}{0,.6,0}
\definecolor{OliveGreen}{cmyk}{0.64,0,0.95,0.40}
\definecolor{CadetBlue}{cmyk}{0.62,0.57,0.23,0}
\definecolor{lightlightgray}{gray}{0.93}
\newcommand{\R}{\mathbb{R}}
\newcommand{\norm}[1]{\left\lVert#1\right\rVert}
\newcommand{\set}[1]{\{#1\}}
\newcommand{\argmin}{\mathop{\mathrm{argmin}}} 
\newcommand{\world}{\mathcal{W}}
\newcommand{\worlddimen}{d}
\newcommand{\worldsize}{\ell}
\newcommand{\hypercube}{H}
\newcommand{\tree}{\mathcal{T}}
\newcommand{\nodes}{\mathcal{N}}
\newcommand{\relations}{\mathcal{R}}
\newcommand{\nodea}{n}
\newcommand{\nodedepth}{k}
\newcommand{\nodelocai}{p}
\newcommand{\node}{\nodea_{\nodedepth,\nodelocai}}
\newcommand{\graph}{\mathcal{G}}
\newcommand{\vertices}{\mathcal{V}}
\newcommand{\vertex}{v}
\newcommand{\edges}{\mathcal{E}}
\newcommand{\edge}{e}
\newcommand{\edgecost}{E}
\newcommand{\agent}{\varphi}
\newcommand{\agents}{\Phi}
\newcommand{\merged}{\agents}
\newcommand{\indices}{\mathcal{I}}
\newcommand{\Path}{P}
\title{\LARGE \bf
	MAMS-A*: Multi-Agent Multi-Scale A*
}
\author{Jaein Lim$^{1}$ and Panagiotis Tsiotras$^{2}$
	\thanks{$^{1}$Jaein Lim is a graduate student at the School of Aerospace Engineering, Georgia Institute of Technology, 
		Atlanta. GA 30332-0150, USA. Email:
		jaeinlim126@gatech.edu}%
	\thanks{$^{2}$ Panagiotis Tsiotras is a Professor and David and Andrew Lewis Chair at the School of Aerospace Engineering and
		Associate Director at the Institute for Robotics and Intelligent Machines, Georgia Institute of
		Technology, Atlanta. GA 30332-0150, USA. Email: tsiotras@gatech.edu}%
}
\begin{document}
	\date{}
	\maketitle
	\thispagestyle{empty}
	\pagestyle{empty}
	
	\begin{abstract}
We present a multi-scale forward search algorithm for distributed agents to solve single-query shortest path planning problems. 
Each agent first builds a representation of its own search space of the common environment as a multi-resolution graph, it communicates with the other agents the result of its local search, and it uses received information from other agents to refine its own graph and update the local inconsistency conditions. 
As a result, all agents attain a common subgraph that includes a provably optimal path in the most informative graph available among all agents, if one exists, without necessarily communicating the entire graph. We prove the completeness and optimality of the proposed algorithm, and present numerical results supporting the advantages of the proposed approach. 
	\end{abstract}
	
	\section{Introduction}
	Humans appear to rely heavily on hierarchical structures for decision-making, especially for complex tasks (e.g., planning a route from work to home, navigating traffic, etc). Despite the fact that any plan is physically realizable as a sequence of refined actions (e.g., walking to the office door, opening the door, walking down the hallway, etc), most humans do not plan at this low level set of actions. 
	Instead, planning is achieved using a variety of abstraction levels, by aggregating the sequence of actions into high-level macro-actions (i.e., exiting the office building, changing lanes in traffic, etc), and by executing this sequence of high-level actions by refining them down~\cite{Marthi,kiebel2008hierarchy}.
	This is done because acquiring perfect knowledge about the environment is often prohibitive, and hence acting on the right level of granularity of pertinent information is imperative in order to operate in dynamic and uncertain environments. 
This observation has led many researchers to investigate multi-scale representations of the underlying search space for planning~\cite{Nissim2, Kamb, Pai2, Holte, Botea, Simon}. 

	\begin{figure}[thpb]
		\centering
		\includegraphics[width=0.7\textwidth]{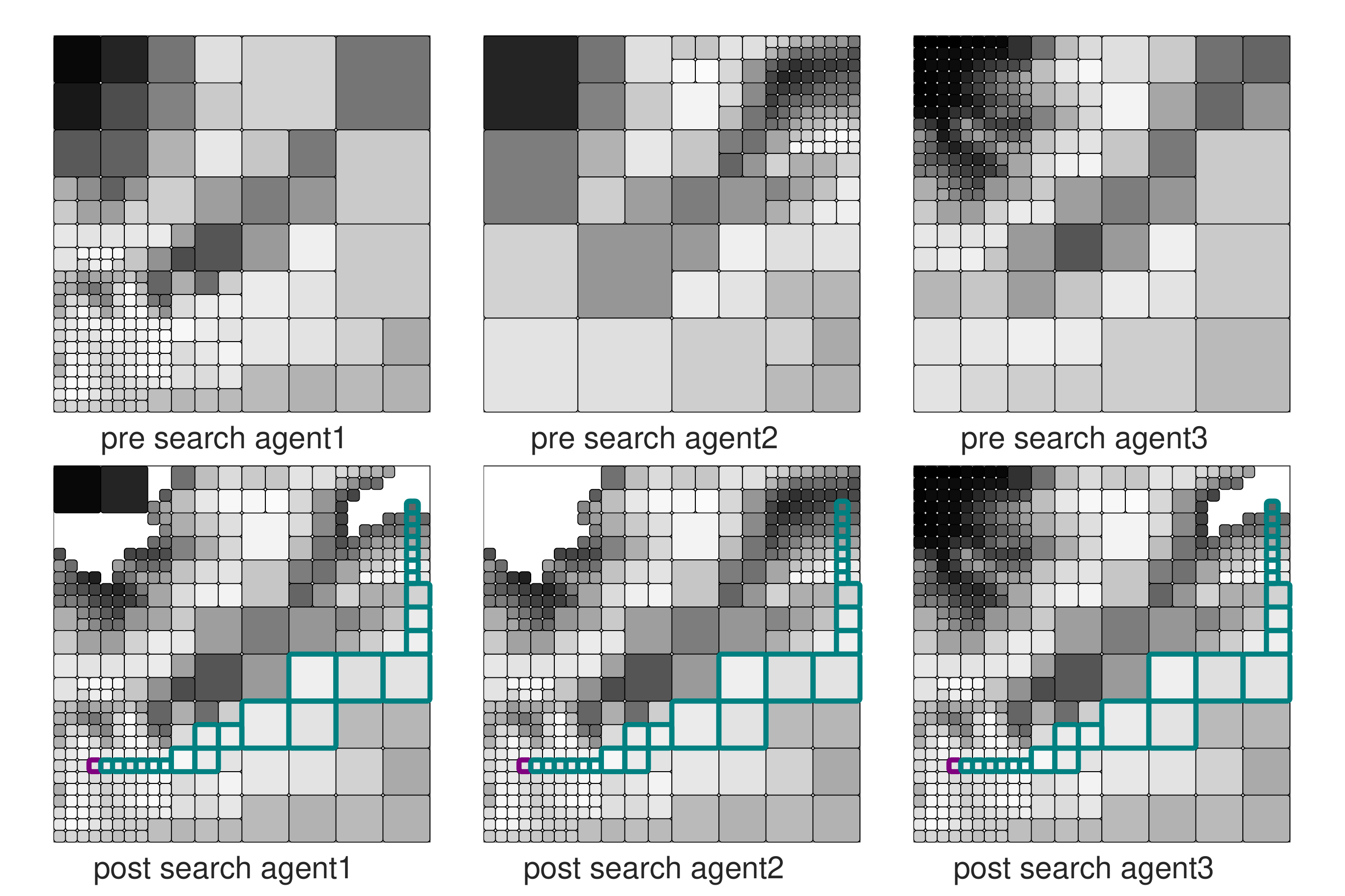}
		\caption{Top: Three agents' abstract graphs of the same environment with gray scale representing risk level. 
		Bottom: The individual agents' abstract graphs and optimal paths (green) from bottom left initial vertex (purple) to the goal vertex in the top right. 
		After the search, the agents share a common subgraph 
		that consists of the optimal path in the finest resolution abstraction available. 
		Unexpanded vertices were not communicated and are left blank.}
		\label{f:result_3a}
	\end{figure}

	Kambhampati and Davis \cite{Kamb} incorporated a top-down refinement scheme of the environment abstracted in a quad-tree for path-planning problems. 
	The obstacle region starting from the coarsest resolution level is excluded from the refinement in order to reduce the search space.
	Likewise, Pai and Reissell~\cite{Pai2} utilized top-down refinement for path-planning, specifically using the wavelet transform of a 2-D environment. 
	As the wavelet difference coefficients inform how smooth the refined path will be in the next refined level, the magnitude of the coefficients serve as an inconsistent heuristic for refinement. 
	Marthi et al~\cite{Marthi} introduced the angelic semantic for a top-down hierarchical planning, which specified for each high-level action the set of reachable states by refinement, along with the associated upper and lower bounds on the cost. 
	They showed that some pruning rules based on the upper and lower bounds on the value of the high-level action plans preserve global optimality, while the optimistic evaluation of an abstract plan serves as a heuristic for refinement~\cite{Marthi}. 
	
	More recently, the construction of multi-resolution graphs of the environment has been studied for path-planning problems to efficiently solve for a locally optimal solution~\cite{Hauer,Jung,Cowlagi, Tsiotras}. 
	These references construct fine resolution graph representations near the agent to accurately represent the environment in the vicinity of the agent, and coarser resolution representations far away from the agent to reduce the dimensionality of the search space.
	As a result, a multi-resolution framework allows a natural formulation of the perception-action loop, while still maintaining practical efficacy by utilizing a reduced search space.

	Different approaches have been studied for planning problems with multiple agents, using distributed or parallel processing techniques to increase the solution quality via inter-agent communication.
	The C-FOREST algorithm~\cite{Otte} employs multiple CPU units in parallel using the same start and goal states for randomized motion planning. Each computing agent stores its own search tree, and communicates with the other agents 
	to restrict the sampling space to focus the search so as to increase solution quality.
	Botea et al~\cite{Botea} built a hierarchical abstraction of the map, in which local roadmaps are computed in parallel with fine abstraction.
	Then, whenever a new set of start and goal locations is given, a solution is computed holistically using the coarser resolution abstraction. 
	A somewhat similar approach was used in~\cite{LHT:tac11} where a multi-scale algorithm was proposed to accelerate A*. 
	By pre-processing the environment in a hierarchical dyadic decomposition, the overall problem is divided to a nested sequence of smaller problems.
	The solutions of these smaller problems were then merged together using a ``bottom-up'' fusion algorithm to get the globally optimal path much faster than existing methods.
	Finally, Nissim and Brafman~\cite{Nissim} formulated a distributed agent planning problem
	where each agent maintains a separate search space and expands the state from its \textit{OPEN} list similar to the regular A* algorithm. 
	Each agent then informs the relevant agents (that is, those who require the current state as a precondition for their actions) of the best \textit{cost-to-come} and the heuristic estimate of that state by sending a message. 
	Once the agent receives the message from another agent, this agent adds the message to its \textit{OPEN} list only if the \textit{cost-to-come} of the state is better than its own. 
	In essence, this is an optimality-preserving pruning technique, where multiple agents communicate local information in an attempt to build their own optimal solution~\cite{Nissim}. 
	
	In this paper, we extend the distributed forward search algorithm proposed in~\cite{Nissim} to incorporate the multi-resolution framework of~\cite{Hauer} for path-planning with multiple agents. 
	Unlike other top-down refinement schemes mentioned above, which rely on the agent's own abstraction hierarchy, we refine each agent's abstract path using the information provided by the other agents.

	\section{Problem Formulation}
	
	\subsection{Multiresolution World Representation}
	
	Without loss of generality, we assume that the environment $\world \subset \R^\worlddimen$ is given as a hypercube of side length $2^\worldsize$ for some positive integer $\worldsize$. 
	The hypercube $\world$ is hierarchically abstracted as a $2^\worlddimen$-tree $\tree = (\nodes, \relations)$ using a recursive dyadic partition, where $\nodes$ is the node set and $\relations$ is the edge set describing the relations of the nodes in $\nodes$, such that each node in the tree encodes the information contained in a subset of $\world$. Specifically, each node $\node \in \nodes$ at depth $\nodedepth$ abstracts  information of the world contained in the hypercube $\hypercube(\node) \subseteq \world$ of side length $2^\nodedepth$ centered at $\nodelocai \in \world$. 
	The function $V: \nodes \to \R_+$ maps each node to some non-negative real value $V(\node)$, for example, the probability of occupancy of $\hypercube(\node)$ or a cost measure to the same region. We assume that $\tree$ is a full tree, that is, each $\node \in \nodes$ has either $2^d$ children or none. 
	The children of $\node$ are denoted by $\set{\nodea_{{\nodedepth-1}, q_i}}_{i\in [1,2^\worlddimen]} $ where $q_i = \nodelocai+2^{\nodedepth-2}e_i$ and $e_i$ is a vector in the set $\set{ [\pm1,...,\pm1]\in \R^\worlddimen$}. 
	
	\subsection{Abstract Graph Construction}

	Let $\agents =\set{\agent_i}_{i\in\indices}$ be a finite set of agents, where $\indices = \set{1,2,...,n}$ is the agent index set. 
	Each agent $\agent_i \in \agents$ builds a non-empty multi-resolution graph $\graph_{i} = (\vertices_{i}, \edges_{i})$ from $\tree$, such that $\graph_{i}$ spatially represents $\world$, where $\vertices_{i}$ is the vertex set and $\edges_{i}$ is the edge set. 
	The agent $\agent_i$ selects some nodes from $\nodes$ as its vertices in a top-down fashion, and for each node 
	$\node \in \nodes$, it selects either all of its children or none of them to maintain a dyadic representation of $\world$ using the selected vertices in $\vertices_{i}$.
	More precisely, when the agent selects the children of a node in $\nodes$, it excludes this node from $\vertices_{i}$.
	This rule ensures that the union of the regions corresponding to the vertices in $\vertices_{i}$ sufficiently covers $\world$. 
	Hence, $\vertices_{i} \subseteq \nodes$, and each $\vertex\in \vertices_{i}$ corresponds to a node of $\nodes$, but the converse is not true. 
	The node $\node$ of $\tree$ is selected as a vertex of $\graph_{i}$ if
	\begin{equation}
	\label{e:alpha}
	\norm{p-p_i}_2 - \sqrt{d} > \alpha 2^{k},
	\end{equation}
	where $p_i$ is the position of agent $\agent_i$ and $\alpha >0$ is a user-specified parameter.
	
	Similarly to the nodes, we denote the hypercube covered by vertex $\vertex\in \vertices_{i}$ as $\hypercube(\vertex)$, 
	such that if vertex $v\in\vertices_i$ corresponds to node $n\in \nodes$, then $\hypercube(v)=\hypercube(n)$.
Let  a set of vertices $W \subseteq \mathcal{N}$.
	We define $H(W)$ be the hypercube covered by $W$, that is, $H(W)= \bigcup_{v\in W} H(v)$.
	Two vertices $\vertex$ and $\vertex'$ in $\vertices_i$ are neighbors if the union of the boundaries of the corresponding hypercubes is neither empty nor a singleton. 
	Each edge $\edge \in \edges_{i}$ assigns a non-negative real value $\edgecost(\vertex,\vertex')$ to a pair of vertices $(\vertex,\vertex')\in \vertices_{i}\times \vertices_{i}$ if and only if the corresponding vertices are neighbors. 
	The edge value $\edgecost(\vertex,\vertex')$ is the cost to traverse from $\vertex$ to $\vertex'$, and is defined as~\cite{Hauer}
	\begin{equation}
	\label{e:cost}
	\edgecost(\vertex,\vertex') = 2^{dk}(\lambda_1 V(\node) + \lambda_2),
	\end{equation}
	where $\lambda_1, \lambda_2 \in (0,1]$ are weights used to penalize the content and the abstraction level, respectively, of the corresponding node $\node$ of $v'$. 
	
	\subsection{Merged Graph}
	
	Consider an arbitrary vertex set $W$ 
	which includes the vertices $u$ and $v$. 
	A vertex $u$ is defined to be a child vertex of $v$ in $W$, if $\hypercube(u) \subset \hypercube(v).$
	We define a vertex $w \in W$ to be a \textit{fine} vertex in $W$ if there are no children vertices of $w$ in $W$. 
	Otherwise, $w$ is defined to be a \textit{coarse} vertex in $W$.
	Hence, every vertex in the individual agent's graph $\graph_{i}=(\vertices_{i}, \edges_{i})$ is finest by construction, as the agent excludes the parent node from $\vertices_{i}$ if the agent chooses the children nodes to include in $\vertices_{i}$.
	
    For the set of agents $\agents$, we define the merged graph $\graph_\merged =(\vertices_\merged, \edges_\merged)$ 
    with vertices consisting of only the fine vertices of the union $\vertices_\indices = \bigcup_{i\in \indices}\vertices_{i}$ of the vertices of all agents' graphs $\graph_i$, 
    and edge set $\edges_\merged$ that assigns the edge cost $E(\vertex,\vertex')$ to every pair of neighboring $(\vertex, \vertex')\in\vertices_\merged \times \vertices_\merged$. 
    The merged graph excludes a vertex $v$ from $ \vertices_\indices$ if there exist children vertices of $v$ that cover the region covered by $v$. Note that $\vertices_\merged \subseteq \vertices_\indices$ is sufficient to spatially represent $\world$ compactly.
    Figure~\ref{f:tree_graph} illustrates the construction of the graph $\graph_\merged$.

	\begin{figure}[thpb]
		\centering
		\includegraphics[width=0.8\textwidth]{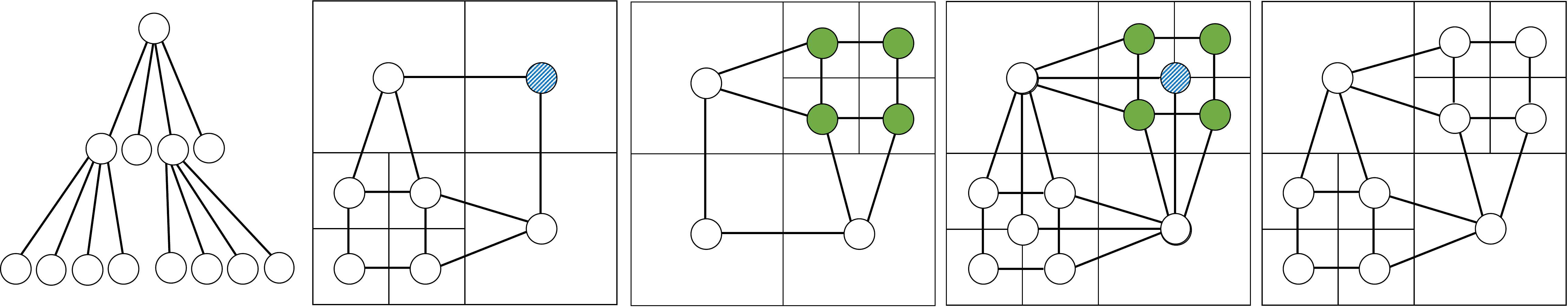}
		\caption{Example of problem formulation for $\world \subset \R^2$, from left to right: a) $\tree$ - abstracted world encoded in a quad-tree; 
		b) $\graph_{1}$ - agent $\agent_1$'s graph construction from $\tree$; 
		c) $\graph_{2}$ - agent $\agent_2$'s graph construction from $\tree$; 
		d) $\vertices_{\indices =\set{1,2}}$ - hashed blue vertex of $\vertices_{1}$ is coarse and solid green vertices of $\vertices_{2}$ are fine in $\vertices_{1} \cup \vertices_{2} $;
	    and e) $\graph_\merged$ - merged graph from both $\graph_{i}$ and $\graph_{j}$}
		\label{f:tree_graph}
	\end{figure}

	\subsection{Path-Planning Problem}     
	
	A path $\Path =(v_0, v_1, \ldots , v_k)$ on the graph $\graph = (\vertices, \edges)$ is an ordered set of vertices $v_i \in \vertices$, $i=0,\ldots,k$ 
	such that for any two consecutive vertices there exists an edge $\edge \in \edges$. 
	Given $\graph$, an initial vertex $\vertex_{\mathrm{init}} \in \vertices$ and a goal vertex $\vertex_{\mathrm{goal}} \in \vertices$, we define the path-planning problem $\Pi$, as the problem to find a path $\Path=(\vertex_\mathrm{init}, \ldots , \vertex_\mathrm{goal})$ in $\graph$. 
	An optimal path $P_\graph (\Pi)$ or $P_\graph$ for a problem instance $\Pi = \Pi(\vertex_\mathrm{init},\vertex_\mathrm{goal})$ with initial and goal vertices $\vertex_\mathrm{init}$ to $\vertex_\mathrm{goal}$ is a path having the smallest cost over all paths from $\vertex_\mathrm{init}$ to $\vertex_\mathrm{goal}$. To simplify the notation, we use $P_\merged$ instead of $P_{\graph_\merged}$ to denote an optimal path in the merged graph $\graph_\merged$.

	We denote the~\textit{cost-to-come} with $g:\vertices \to \R_+$ and the heuristic \textit{cost-to-go} with $h:\vertices \to \R_+$, which both assign each vertex in $\vertices$ a non-negative real value. 
	For each individual agent, these functions may assign different values to the same vertex. 
	Hence, we use a subscript, namely $g_{i}(\vertex_k)$, to denote the cost accumulated from the initial vertex $\vertex_\mathrm{init} = \vertex_0$ to 
	$\vertex_k$ in the agent $\agent_i$'s graph, and use $h_{i}(\vertex_k)$ as the heuristic estimate from $\vertex_k$ to the goal vertex $\vertex_\mathrm{goal} = \vertex_g$ in the agent $\agent_i$'s graph representation. 
	For each agent, we define the evaluation function $f_{i}(\vertex_k)= g_{i}(\vertex_k) + h_{i}(\vertex_k)$. 
	For the single agent case with uniform resolution, the problem formulation reduces to the well-known shortest-path-planning problem on a graph.    
	
	\section{The Multi-Agent Multi-Scale A* (MAMS-A*) Algorithm}
	
	Some agents may have a finer resolution abstraction in a region than others, and we are interested in finding an optimal path that would be optimal in the finest resolution abstraction of the world if agreed upon by all agents. 
	One na\"ive way to achieve this objective would be to build a single merged graph abstraction $\graph_{\agents}$ from all agents in $\agents$ first, and then search for an optimal path $P_\merged$ within this new graph. 
	Consequently, all agents will perform a search in the most informative graph available among them. 
	However, as much this approach refines the abstraction level of the graph, it also enlarges the search space. 
	In addition, the amount of information communicated among the agents is increased unnecessarily by broadcasting the entire graph, whether or not the received information can improve the solution quality. 
	Thus, construction of such a graph dilutes the benefit of using multi-resolution abstractions during the search. 
	
	To alleviate the mentioned issues, we propose an algorithm in which each agent  $\agent_i$ broadcasts a message containing the vertex information (i.e., $\langle s, g_{i}(s), h_{i}(s) \rangle $) only when 
	$\agent_i$ expands its own vertex $s$. 
	Also, each agent processes the received message only if it contains finer resolution information about the environment or if the messaged vertex has not been expanded or it has a better \textit{cost-to-come} than the agent's own cost-to-come value. 
	Hence, we limit communication only to the expanded vertices. 
	If the heuristic used by the agents is consistent, then the algorithm expands and broadcasts only the vertices with the lowest possible \textit{total cost}.
	In addition, we assume that every broadcasted message reaches all agents in $\agents$. 
	We present the pseudo-algorithm below.

	\begin{algorithm}
		\caption{MAMS-A* for $\agent_i$}\label{a:search}
		\begin{algorithmic}[1]
			\While {$\exists$ active agent $\in \agents$} \label{ap:terminate_cond}
			\State ProcessMessage
			\If {$\agent_i$ is active}
			\State $s \gets$ pop front \textit{OPEN}
			\State Expand($s$)
			\State Publish($s$)
			\EndIf
			\EndWhile
		\end{algorithmic}
	\end{algorithm}

	\begin{algorithm}
		\caption{ProcessMessage}\label{a:processmsg}
		\begin{algorithmic}[1]
			\For{message = ($\langle s, g_{j}(s), h_{j}(s) \rangle $) $\in$ message\_que}
			\Switch {message}
			\Case {$ \exists $ $v$ $\in$ $\vertices_i$ such that $\hypercube(v)=\hypercube(s)$} \label{al:inconsistencyupdate}
			\If {$v$ not expanded \Or $g_{i}(v) > g_{j}(s)$ }
			\State Adopt($s$)
			\EndIf
			\EndCase
			\Case {$ \exists $ $v$ $\in$ $\vertices_i$ such that $\hypercube(s) \subset \hypercube(v)$} \label{al:updategraph}
			\State add $s$ to $\vertices_i$
			\For{$v'$ $\in$ neighbors($v$)}
			\State remove $e(v,v')$ from $\edges_i$ \label{al:updategraph-remove}
			\State add $e(s,v')$ to $\edges_i$ if exists
			\EndFor
			\State remove $v$ from $\vertices_i$
			\State remove $v$ from \textit{OPEN} \Or \textit{CLOSED}
			\State Adopt($s$)
			\EndCase
			\Case {$ \exists$ $v$ $\in \vertices_i$ such that $\hypercube(v) \subset \hypercube(s)$}
			\State discard message
			\EndCase
			\Else {} \label{al:localupdate}
			\State add $s$ to $\vertices_i$ \label{al:local addition}
			\For{$v'$ $\in \vertices_i$}
			\State add $e(s,v')$ to $\edges_i$ if exists
			\EndFor
			\State Adopt($s$)
			\EndElse
			\EndSwitch
			\EndFor
		\end{algorithmic}
	\end{algorithm}

	\begin{algorithm}
		\caption{Expand($s$)}\label{a:expand}
		\begin{algorithmic}[1]
			\If {s is a goal vertex \Or \textit{OPEN} = $\emptyset$ } \label{al:inactivate}
			\State inactivate $\agent_i$ \label{ap:inactivate}
			\EndIf
			\Else
			\State move s to \textit{CLOSED}
			\For {$s' \in$ neighbors(s)}
			\If {$s'\notin$ \textit{CLOSED} }
			\If {$g_{i}(s') > g_{i}(s)+ \edgecost(s,s')$}
			\State $g_{i}(s') \gets g_{i}(s)+ \edgecost(s,s')$ \label{al:cost_propagate}
			\State $f_{i}(s') \gets g_{i}(s') + h_{i}(s') $
			\State add $s'$ to \textit{OPEN}
			\State predecessor($s'$) $\gets$ s
			\EndIf
			\EndIf
			\EndFor
			\EndElse
		\end{algorithmic}
	\end{algorithm}

	\begin{algorithm}
		\caption{ Adopt($s$) }\label{a:adopt}
		\begin{algorithmic}[1]
			\If{$\exists$ predecessor(s) in $\vertices_i$} \label{al:predecessorcheck}
			\State $g_{i}(s) \gets g_{j}(s)$ \label{ap:msg update}
			\State $h_{i}(s) \gets \max(h_{i}(s) ,h_{j}(s))$
			\State $f_{i}(s) \gets g_{i}(s) +h_{i}(s) $
			\State add $s$ to \textit{OPEN}
			\State reactivate $\agent_i$ \label{ap:reactivate}
			\EndIf
		\end{algorithmic}
	\end{algorithm}   
		
	The proposed multi-agent multi-scale A* algorithm comprises of three major procedures: 1) \textsf{ProcessMessage}, 2) \textsf{Expand}, and 3) \textsf{Publish}, and is summarized in Algorithm~\ref{a:search}.
	Each individual agent $\agent_i\in \agents$ repeats these procedures until the finest goal vertex is expanded and processed among the agents in $\agents$. 
	Once agent $\agent_i$ expands the goal vertex of its graph $\graph_i$, this agent is inactivated by Line~\ref{ap:inactivate} of Algorithm~\ref{a:expand}. Inactive agents do not expand any vertices nor publish any messages until a message from another agent reactivates them by Line~\ref{ap:reactivate} of Algorithm~\ref{a:adopt}. 
	The algorithm starts with all agents being active, and terminates when all agents become inactive (Line~\ref{ap:terminate_cond} of Algorithm~\ref{a:expand}). 
	Suppose $u_i$ and $u_j$ are the goal vertices of agents $\agent_i$ and $\agent_j$ respectively, such that $\hypercube(u_j) \subset \hypercube(u_i)$. 
	If agent $\agent_i$ expands its goal vertex $u_i\in \vertices_i$ (which is coarse in the vertex set $\vertices_\indices$ since $\hypercube(u_j) \subset \hypercube(u_i)$)
	 before the vertex $u_j\in \vertices_j$ 
	 is broadcasted by another agent $\agent_j$, then $\agent_i$ will not expand any vertices nor publish any messages until this agent is reactivated by an incoming message containing the vertex $u_j$ (Line~\ref{ap:reactivate} of Algorithm~\ref{a:adopt}). 
	On the other hand, if a message containing vertex $u_j$ is broadcasted to agent $\agent_i$ before $\agent_i$ expands its goal vertex $u_i$, then $u_i$ will be simply removed from the agent $\agent_i$'s graph. 
	Hence, every agent will expand the fine goal vertex in $\vertices_\merged$ at least once before termination.

	In case an agent receives finer resolution information about one of its vertices (Line~\ref{al:updategraph} of Algorithm~\ref{a:processmsg}), the agent removes this vertex from its graph and from its \textit{OPEN} or \textit{CLOSED} lists.
	When agent $\agent_i$ receives the expansion result of the vertex from agent $\agent_j$ that already exists in its graph $\graph_i$, Line~\ref{al:inconsistencyupdate} of Algorithm~\ref{a:processmsg} puts the expansion result in the \textit{OPEN} list only if the vertex has not been expanded or if it has better \textit{cost-to-come} value. 	
	When a new vertex is received that has no related vertex in $\graph_i$ because the coarse vertex has already been removed, then Line~\ref{al:localupdate} of Algorithm~\ref{a:processmsg} simply updates the local graph and puts the expansion result in the \textit{OPEN} list.

	\begin{figure}[thpb]
		\centering
		\includegraphics[width=0.8\textwidth]{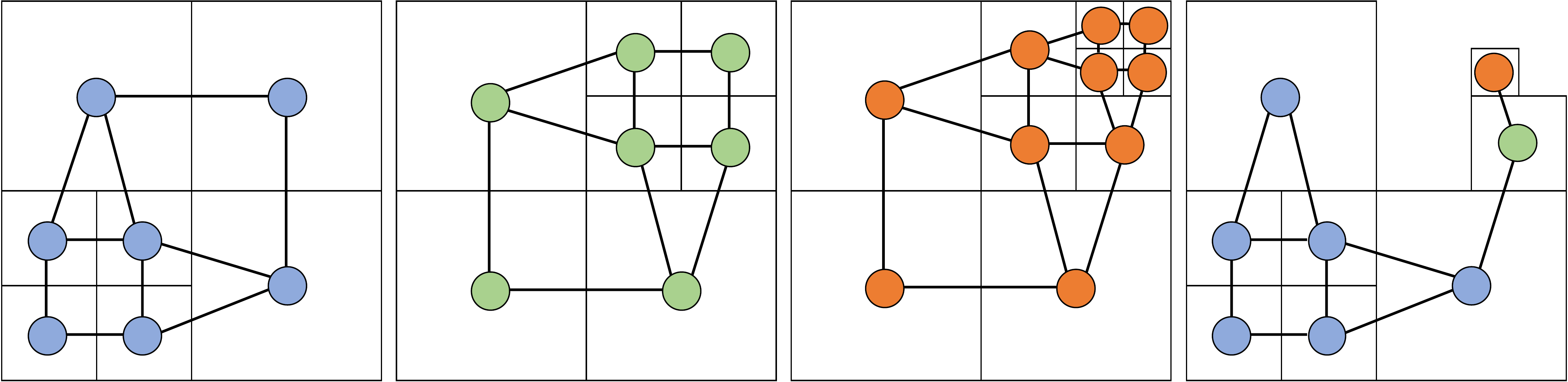}
		\caption{Illustrative example of Algorithm~\ref{a:processmsg}, from left to right: a) $\graph_{i}$ - agent $\agent_i$'s original graph; 
		b) $\graph_{j}$ - agent $\agent_j$'s graph; 
		 c) $\graph_{k}$ - agent $\agent_k$'s graph; 
		 d) agent $\agent_i$'s modified graph after receiving messages from $\agent_j$ and $\agent_k$}
		\label{f:cuckoo_ex}
	\end{figure}
	
	By removing a vertex from its graph upon receiving fine resolution information about this vertex, we are preemptively preventing each agent to construct a path that includes coarse resolution vertices from $\vertices_\indices$.
	Instead, if necessary, the agent will receive fine vertices passing through the region originally covered by the removed vertex, as the other agents with fine resolution information available expand and broadcast (see Figure~\ref{f:cuckoo_ex}, for instance). 
	Indeed, if an optimal path $P_\merged$ passes through the region covered by the removed vertices, then the corresponding segment of $P_\merged$ will be expanded and broadcasted by other agents. 
	Hence, path connectivity for the agent will be restored despite of the removal of coarse vertices. 
	We give the formal proof of optimality and completeness of the MAMS-A* algorithm in the next section.

	\section{Analysis}
	
	We prove the completeness of the MAMS-A*
	algorithm and the optimality of the solution with respect to the merged graph by extending a well known result of A* \cite{Hart}. 
	We will use Lemma~\ref{lemma: baby lemma} and its corollary to show the completeness of the algorithm regardless of the removal of coarse vertices, assuming that every broadcasted message arrives to all the agents.

	\begin{lemma}
		\label{lemma: baby lemma}
	 Let $\vertices_\indices$ be the collection of all agents' vertices, that is, $\vertices_\indices = \bigcup_{i\in \indices}\vertices_i.$ 
	 For any coarse vertex $u$ in $\vertices_\indices$ of an agent $\agent_i$, there exists a set of the fine vertices $W \subseteq \vertices_\indices$ such that $H(W) = H(u)$.  
	\end{lemma}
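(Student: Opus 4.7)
The plan is to induct on the depth $k$ of the tree node underlying $u$, starting from the finest level ($k=0$). At depth $0$ the dyadic tree has no strictly smaller sub-hypercubes, so no vertex can be coarse and the claim is vacuous. For the inductive step with $u$ coarse at depth $k > 0$, the goal is to produce a collection $W \subseteq \vertices_\indices$ of fine vertices whose hypercubes tile $\hypercube(u)$.

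The key structural fact I will exploit is that each individual $\vertices_j$ is itself a cut of the dyadic tree $\tree$: its hypercubes partition $\world$ and no two of its vertices are nested. This is precisely the ``every vertex in the individual agent's graph is finest by construction'' observation noted earlier in the paper. Because $u$ is coarse in $\vertices_\indices$, some agent $\agent_j$ owns a vertex $u' \in \vertices_j$ with $\hypercube(u') \subsetneq \hypercube(u)$. I then set
\begin{equation*}
W_0 \;=\; \{\, v \in \vertices_j : \hypercube(v) \subseteq \hypercube(u) \,\},
\end{equation*}
and claim $\hypercube(W_0) = \hypercube(u)$. For any $x \in \hypercube(u)$, the cut property gives a unique $v_j(x) \in \vertices_j$ with $x \in \hypercube(v_j(x))$; dyadic nesting forces either $\hypercube(v_j(x)) \subseteq \hypercube(u)$ or $\hypercube(u) \subsetneq \hypercube(v_j(x))$, and the latter would place $v_j(x)$ and $u'$ both in $\vertices_j$ with strict containment, contradicting the cut property. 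Hence $v_j(x) \in W_0$, which gives the covering; the reverse containment is immediate.

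Every $v \in W_0$ corresponds to a strict descendant of the node underlying $u$, so it lies at depth strictly less than $k$. For each such $v$: if $v$ is already fine in $\vertices_\indices$ take $W_v = \{v\}$; otherwise the induction hypothesis, applied to $v$ as a coarse vertex of agent $\agent_j$, furnishes a fine tiling $W_v \subseteq \vertices_\indices$ of $\hypercube(v)$. The required set is then $W = \bigcup_{v \in W_0} W_v$, which consists of fine vertices in $\vertices_\indices$ and satisfies $\hypercube(W) = \hypercube(u)$.

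The step I expect to be the main obstacle is justifying the tiling claim $\hypercube(W_0) = \hypercube(u)$, since it rests on extracting the incomparability/cut property of $\vertices_j$ cleanly from the paper's informal top-down construction of an agent's graph; once that property is in hand, the remainder of the argument is routine bookkeeping over the induction.
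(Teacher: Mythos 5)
Your proof is correct, and it rests on the same structural insight as the paper's --- that a single agent's vertex set $\vertices_j$ is a cut of the tree $\tree$ tiling $\world$ with no nested pairs --- but the decomposition is organized differently, and in a way that tightens the paper's argument. The paper proceeds bottom-up: it picks the \emph{deepest} vertex $v_n \in \vertices_j$ inside $\hypercube(u)$ and climbs the tree through sibling sets $W_1, W_2, \ldots, W_{k-n-1}$ (each covering the next-larger hypercube) until it covers $\hypercube(u)$, then handles leftover coarse vertices by an informal ``repeating the entire argument.'' You instead go top-down in one step, taking $W_0 = \set{v \in \vertices_j : \hypercube(v) \subseteq \hypercube(u)}$ and proving $\hypercube(W_0) = \hypercube(u)$ directly from the cut property, and you replace the paper's informal repetition with a well-founded strong induction on depth --- your $W_0$ elements are strict descendants of $u$, so the recursion demonstrably terminates, a point the paper leaves implicit. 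Two cosmetic remarks: since the hypercubes $\hypercube(v)$ are closed, neighboring cells share boundary points, so $v_j(x)$ is not literally unique and the nesting dichotomy can fail for boundary $x$; restrict to $x$ in the interior of the dyadic cells (a dense subset of $\hypercube(u)$) and conclude by closedness of the finite union. Likewise, the strict descendancy of each $v \in W_0$ deserves one line --- $u \notin \vertices_j$ because $u' \in \vertices_j$ sits strictly inside $u$, which is exactly the incomparability argument you already made --- but neither point is a genuine gap.
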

	
	\begin{proof}
		Let $k$ be some integer such that  vertex $v_k$ has corresponding region $H(v_k)$ with side length $2^k$.  
		Suppose vertex $v_{k}$ of agent $\agent_i$ is coarse in $\vertices_\indices$, that is, there exists at least one vertex $v_{m}\in \vertices_\indices$ such that $H(v_m)\subset H(v_k)$ for $m<k$. 
		Let $M =\set{m: H(v_m)\subset H(v_k), v_m\in \vertices_\indices}$. 
		Then $M$ is non-empty, closed and bounded, since $v_k$ is coarse and there is only a finite number of vertices in $\vertices_\indices$. 
		Let $n$ be the minimum of $M$. 
		Then $v_{n}$ is a fine vertex corresponding to $v_k$, that is, $\hypercube(v_n) \subset \hypercube(v_k)$ with $n\leq m <k$. Without loss of generality, let $\vertices_j$ be the vertex set of agent $\agent_j$ which includes $v_n$.
		Since $\hypercube(\vertices_j) =\world$, and $\vertices_j$ is the dyadic partition of $\world$, there exists a subset $V_{n}\subseteq \vertices_j$ of cardinality $2^\worlddimen$ containing $v_{n}$ whose elements have the same corresponding parent node, namely, $\mu_{n+1}$, and $H(V_{n})=H(\mu_{n+1})$.
		If $n=k-1$, then $H(V_{k-1})=H(v_{k})$ and $V_{k-1} \subseteq \vertices_j \subseteq \vertices_\indices$. Since there cannot exist a finer vertex than $v_n$, the claim holds. 
		
		Consider now $n< k-1$. 
		Since the agent $\agent_j$ has the fine  resolution vertex $v_{n}$, the agent had once selected the node 
		 $\mu_{n+1}\in \nodes$, the parent node of $\mu_{n}\in \nodes$, where $H(\mu_n) = H(v_n)$, from tree $\tree$ as its vertex during the construction of its abstract graph $\graph_j$. 
		 Hence, there exists a set $W_1 \subseteq \vertices_{j}$ which contains either siblings of $v_{n+1}$ or their children, such that $H(W_1) = H(v_{n+2})$. In the same way, since agent $\agent_j$ had once selected $\mu_{n+1}\in \nodes$, it also had selected the parent node $\mu_{n+2} \in \nodes$ 
		 during the construction of its abstract graph $\graph_j$.
		 Hence, there exists a set $W_2 \subseteq \vertices_{j}$ which contains either siblings of $v_{n+2}$ or their children, such that $H(W_2) = H(v_{n+3})$. Repeating the same argument $k-n-1$ times, we have a set $W_{k-n-1}\subseteq \vertices_{j}$ such that $H(W_{k-n-1})=H(v_{k})$, and $W_{k-n-1}$ has vertices with side length not greater than level $k-1$. Repeating the entire argument for all coarse vertices of $W_{k-n-1}$ completes the proof.
	\end{proof}
	
	\begin{figure}[thpb]
		\centering
		\includegraphics[width=0.8\textwidth]{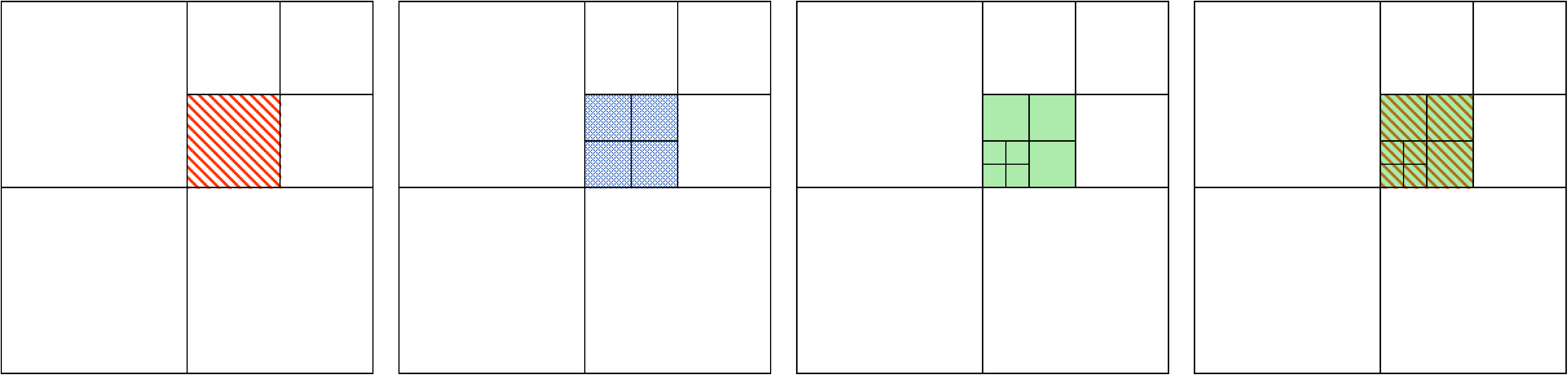}
		\caption{Illustrative example Lemma~\ref{lemma: baby lemma}, from left to right:
			a) agent $\agent_1$'s vertex set $\vertices_{1}$; 
			b) agent $\agent_2$'s vertex set $\vertices_{2}$; 
			c) agent $\agent_3$'s vertex set $\vertices_{3}$; 
			d) $\vertices_{\indices=\set{1,2,3}}$ - $\agent_1$'s hashed red vertex is coarse in $\vertices_{\indices=\set{1,2,3}}$, and there exists a set of the fine vertices (green solid) covering the same region.   
			}
		\label{f:lemma_pic}
	\end{figure}
	
	\begin{corollary}  		\label{coro: baby corollaray}
		For any coarse vertex $u$ of an agent $\agent_i$, there exists a set of fine vertices $W \subseteq \vertices_\merged$ such that $H(W) = H(u)$.
	\end{corollary}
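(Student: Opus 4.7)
The plan is to derive this as an essentially immediate consequence of Lemma~\ref{lemma: baby lemma}, using only the definition of the merged graph. First I would clarify the meaning of ``coarse vertex $u$ of an agent $\agent_i$'': by construction, every vertex in the individual graph $\graph_i$ is fine in $\vertices_i$, so the hypothesis is necessarily that $u$ is coarse in the union $\vertices_\indices$. This places us exactly in the hypothesis of Lemma~\ref{lemma: baby lemma}.

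Next, I would invoke Lemma~\ref{lemma: baby lemma} directly to obtain a set $W \subseteq \vertices_\indices$ consisting of fine vertices (in $\vertices_\indices$) with $H(W) = H(u)$. No further refinement argument or induction is needed, since the lemma already delivers a covering by vertices that are fine with respect to $\vertices_\indices$.

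Finally, I would appeal to the definition of $\graph_\merged$: the vertex set $\vertices_\merged$ is precisely the collection of fine vertices of $\vertices_\indices$. Consequently every element of $W$ lies in $\vertices_\merged$, so $W \subseteq \vertices_\merged$, while the equality $H(W) = H(u)$ is inherited verbatim from Lemma~\ref{lemma: baby lemma}. The main (and mild) obstacle is only bookkeeping, namely verifying that the notion of ``fine'' used implicitly in the corollary coincides with that of Lemma~\ref{lemma: baby lemma}, which in turn coincides with the membership condition for $\vertices_\merged$; once this is noted, the corollary follows with no additional construction.
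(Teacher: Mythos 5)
Your proposal is correct and follows essentially the same route as the paper's own proof: invoke Lemma~\ref{lemma: baby lemma} to obtain a set $W \subseteq \vertices_\indices$ of fine vertices with $H(W)=H(u)$, then conclude $W \subseteq \vertices_\merged$ from the fact that $\vertices_\merged$ consists exactly of the fine vertices of $\vertices_\indices$. Your added remark clarifying that ``coarse vertex of an agent $\agent_i$'' must mean coarse in the union $\vertices_\indices$ (since each $\vertices_i$ contains only fine vertices by construction) is a sound bookkeeping point that the paper leaves implicit.
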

	
	\begin{proof}
		By Lemma~\ref{lemma: baby lemma}, for any coarse vertex $u\in \vertices_\indices$, there exists a set of the fine vertices $W\subseteq \vertices_\indices$ and $H(W)=H(u)$. 
		Since $\vertices_\merged \subseteq \vertices_\indices$ is the largest finest resolution vertex
		set in $\vertices_\indices$, and $W$ consists only the fine vertices, it follows that $W\subseteq \vertices_\merged$.    
	\end{proof}
	
	The following lemma and the corollary will be used to prove the optimality of the path with respect to the merged graph $\graph_\merged$.

	\begin{lemma} 		\label{lemma: the lemma}
		Let $v$ be a vertex that has not been expanded by any agent. 
		For any optimal path $P_\merged$ from $s$ to $v$, there exists an agent $\agent_i \in \agents$ that has either an open vertex $v'$ or has an incoming message $\langle v', g_{j}(v'), h_{j}(v') \rangle $ from $\agent_j$, $i\neq j$
		such that $v'$ is on $P_\merged$ and $g_i (v') = g_\merged(v')$. 
	\end{lemma}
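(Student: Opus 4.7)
The plan is to extend the classical A* witness lemma of Hart et al.~\cite{Hart}---that every optimal path contains an OPEN vertex with the correct cost-to-come---to the distributed multi-scale setting, by induction on the number of algorithmic steps executed in Algorithm~\ref{a:search}. Fix an unexpanded vertex $v$ and an optimal path $P_\merged = (s = v_0, v_1, \ldots, v_k = v)$ in $\graph_\merged$. In the base case, immediately after initialization every agent $\agent_i$ holds the start vertex $s$ in its OPEN list with $g_i(s) = 0 = g_\merged(s)$, and so the choice $v' = s$ satisfies condition~(a) of the lemma.

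For the inductive step, I would assume the invariant holds with some witness $v_{m^*}$ on $P_\merged$ before the next algorithmic step and then examine each possible action. Most sub-cases are routine: an \textsf{Expand} on a vertex $u \neq v_{m^*}$, or a \textsf{ProcessMessage} of a message unrelated to $v_{m^*}$, leaves the witness intact. If the next step is a \textsf{ProcessMessage} on the very message that supplied a type-(b) witness, then \textsf{Adopt} inserts $v_{m^*}$ into the receiving agent's OPEN with cost $g_\merged(v_{m^*})$, upgrading the witness to type~(a).

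The central case is when the step expands $v_{m^*}$ by its holder $\agent_i$, which moves $v_{m^*}$ to CLOSED, relaxes its neighbors in $\graph_i$, and publishes $\langle v_{m^*}, g_\merged(v_{m^*}), h_i(v_{m^*}) \rangle$. If $v_{m^*+1}$ is present in $\graph_i$, the relaxation yields $g_i(v_{m^*+1}) \leq g_\merged(v_{m^*}) + E(v_{m^*}, v_{m^*+1}) = g_\merged(v_{m^*+1})$, and combined with the lower bound that no achievable path cost can beat the optimum, I obtain equality, so $v_{m^*+1}$ enters $\agent_i$'s OPEN and becomes the new type-(a) witness. Otherwise the Publish step deposits the outgoing message into every other agent's queue, so condition~(b) continues to hold with witness $v_{m^*}$; and by Corollary~\ref{coro: baby corollaray}, since $v_{m^*+1} \in \vertices_\merged$, some agent $\agent_j$ contains $v_{m^*+1}$ in its initial graph and will eventually adopt the broadcast message, continuing the witness chain.

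The main obstacle I anticipate is the multi-scale bookkeeping in Algorithm~\ref{a:processmsg}: ruling out that a sequence of message-induced local rewrites could simultaneously destroy every witness on $P_\merged$. The key leverage is that every vertex of $P_\merged$ belongs to $\vertices_\merged$ and is therefore a finest-resolution vertex of $\vertices_\indices$; consequently no strictly finer message can ever arrive, so the coarse-removal branch of Line~\ref{al:updategraph-remove} cannot delete a witness. A secondary subtlety is ensuring that the \textsf{Adopt} predecessor check succeeds at some recipient whenever needed---again handled by Corollary~\ref{coro: baby corollaray}, which guarantees that any coarse region is covered by fine vertices of $\vertices_\merged$, so path connectivity on $P_\merged$ can always be reconstituted across agents despite local resolution mismatches.
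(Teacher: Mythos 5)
Your proof is correct at the paper's level of rigor, but it takes a genuinely different route from the paper's own argument. The paper does not induct over execution steps: it freezes an arbitrary moment at which $v$ is unexpanded and runs the classical witness argument of Hart et al.~\cite{Hart} directly in the multi-agent setting. It defines $\Delta$ as the set of vertices on $P_\merged$ already closed by some agent with correct cost-to-come, $g_i(\delta)=g_\merged(\delta)$; this set is nonempty (it contains $s$, or else $s$ itself is the open witness), and taking the element $v^*$ of $\Delta$ with the highest index along $P_\merged$, the paper examines its successor $v'$: either the expanding agent $\agent_i$ can reach $v'$, in which case Line~\ref{al:cost_propagate} of Algorithm~\ref{a:expand} already set $g_i(v') = g_i(v^*)+E(v^*,v') = g_\merged(v')$, or some other agent $\agent_j$ reaches $v'$ by processing the broadcast message $\langle v^*, g_i(v^*), h_i(v^*)\rangle$, yielding the message-type witness. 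Your step-indexed invariant proof and the paper's static extremal-element proof establish the same fact; yours trades brevity for an explicit audit of how each algorithmic action (\textsf{Expand}, \textsf{ProcessMessage}, \textsf{Adopt}, \textsf{Publish}) can move or upgrade the witness, which is genuinely valuable here because it forces you to justify the one step the paper merely asserts --- ``no vertices in $P_\merged$ can be removed'' --- via the observation that every vertex of $P_\merged$ lies in $\vertices_\merged$ and is therefore finest in $\vertices_\indices$, hence immune to the coarse-removal branch of Algorithm~\ref{a:processmsg}. Two small corrections: the fact that some agent's initial graph contains $v_{m^*+1}$ follows directly from $\vertices_\merged \subseteq \vertices_\indices = \bigcup_{i\in\indices}\vertices_i$, not from Corollary~\ref{coro: baby corollaray} (which is about covering a coarse region by fine vertices, and is what the paper invokes in the termination argument, not here); and both your proof and the paper's gloss over the possibility that the predecessor check in \textsf{Adopt} (Line~\ref{al:predecessorcheck} of Algorithm~\ref{a:adopt}) fails at every recipient when the predecessor's region is held coarsely by all of them, so on that soft spot you are no worse off than the original.
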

	
	\begin{proof}
		Let $P_\merged = (s=v_0, v_1, ..., v_k = v)$. If $v'=s$, then the lemma is trivially true, since $g_{i} (s) = g_\merged(s) = 0$. 
		Suppose now that $s$ is closed by all agents, and let $\Delta$ be the set of closed vertices by some agents in $P_\merged$,  such that $g_i(\delta) = g_\merged(\delta)$ for all $\delta \in \Delta$. 
		Then $\Delta$ is not empty, since $s\in \Delta$. Let $v^*$ be the element of $\Delta$ with the highest index closed by agent $\agent_i$, that is $v^* =\argmin_{\delta\in\Delta} f_i(\delta)$.
		Clearly, $v^* \neq v$, as $v$ is not closed. 
		Let $v'$ be the successor of $v^*$ on $P_\merged$ (possibly $v' = v$). 
		If $v'$ is reachable by $\agent_i$, then $g_i(v') = g_i(v^*) + \edgecost(v^*, v') = g_\merged(v')$, because $v'$ is on $P_\merged$. 
		Otherwise, there exists $\agent_j$ who can reach $v'$ via processing the message $\langle v^*, g_{i}(v^*), h_{i}(v^*) \rangle $, because no vertices in $P_\merged$ can be removed. 
		By Line~$\ref{al:cost_propagate}$ of Algorithm~\ref{a:expand},  $g_j(v') = g_{j}(v^*) + \edgecost(v^*, v') =   g_\merged(v')$. 
		Then, the message $\langle v', g_{j}(v'), h_{j}(v')\rangle$ is broadcasted, and eventually $g_{j}(v') = g_i(v') =  g_\merged (v')$ by Line~$\ref{ap:msg update}$ of Algorithm~\ref{a:adopt}. 
		Hence the claim holds. 
	\end{proof}
	
	\begin{corollary}  		\label{coro: the corollaray}
		Suppose $h_k$ is admissible for all $k\in \indices$, that is, $h_k \le h_\merged$, where $h_\merged$ is the true \textit{cost-to-go} in $\graph_\merged$, and suppose the algorithm has not terminated. 
		Then, for any optimal path $P_\merged$ from the initial vertex $s$ to any goal vertex, there exists an agent $\agent_i$ which either has an open vertex $v'$ or has an incoming message containing $v'$, such that $v'$ is on $P_\merged$ and $f_{i}(v') \leq f_\merged(s)$. 
	\end{corollary}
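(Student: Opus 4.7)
My plan is to invoke Lemma~\ref{lemma: the lemma} to produce the witness vertex $v'$ on $P_\merged$ satisfying $g_i(v') = g_\merged(v')$, and then upgrade that $g$-equality into the required $f$-inequality using Bellman's principle along $P_\merged$ together with the admissibility of~$h_i$.

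First I would apply Lemma~\ref{lemma: the lemma} to the optimal path $P_\merged$ from $s$ to the goal, taking the unexpanded endpoint $v$ in the lemma to be a vertex of $P_\merged$ that has not yet been expanded by any agent. The non-termination hypothesis is used precisely at this point to certify that such a $v$ exists: if every vertex of $P_\merged$ (including the goal) were already closed with its correct $g_\merged$-value in every agent, the inactivation condition in Line~\ref{ap:inactivate} of Algorithm~\ref{a:expand} together with the outer-loop test in Line~\ref{ap:terminate_cond} of Algorithm~\ref{a:search} would have already fired. Lemma~\ref{lemma: the lemma} then returns an agent $\agent_i$ and a vertex $v'$ on $P_\merged$, held either in $\agent_i$'s \textit{OPEN} list or in its incoming message queue, with $g_i(v') = g_\merged(v')$.

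Second, I would extract the numerical bound purely algebraically. Because $v'$ lies on the optimal $s$-to-goal path in $\graph_\merged$, Bellman's principle of optimality gives
\begin{equation*}
f_\merged(s) = g_\merged(s) + h_\merged(s) = g_\merged(v') + h_\merged(v'),
\end{equation*}
where $h_\merged(v')$ denotes the true cost-to-go from $v'$ to the goal in $\graph_\merged$. Combining the admissibility bound $h_i(v') \leq h_\merged(v')$ with the equality from Lemma~\ref{lemma: the lemma} then yields
\begin{equation*}
f_i(v') = g_i(v') + h_i(v') \leq g_\merged(v') + h_\merged(v') = f_\merged(s),
\end{equation*}
which is the claimed inequality.

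I expect the main obstacle to be the existential step rather than the arithmetic: the multi-agent, multi-resolution setting makes ``not expanded by any agent'' a delicate predicate, since a refinement message received via Line~\ref{al:updategraph} of Algorithm~\ref{a:processmsg} can erase a vertex from a receiving agent's graph and thereby retroactively ``unexpand'' it. Once Lemma~\ref{lemma: the lemma} is applicable, however, the remainder is a routine A*-style manipulation that simply plugs admissibility into Bellman decomposition along $P_\merged$.
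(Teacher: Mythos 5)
Your proposal is correct and follows essentially the same route as the paper: invoke Lemma~\ref{lemma: the lemma} to obtain the witness $v'$ on $P_\merged$ with $g_i(v') = g_\merged(v')$, then combine admissibility $h_i(v') \leq h_\merged(v')$ with the fact that $f_\merged$ is constant along the optimal path, so $f_\merged(v') = f_\merged(s)$. Your added remarks on why non-termination licenses applying the lemma (and the caveat about vertices being ``unexpanded'' by refinement messages) go slightly beyond what the paper spells out, but the core argument is identical.
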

	
	\begin{proof}
		By Lemma~\ref{lemma: the lemma}, there exists an agent $\agent_i$ which either has an open vertex $v'$ or has an incoming message containing $v'$ on $P_\merged$ with $g_i (v') = g_\merged(v')$. Then
		\begin{equation*}
		\begin{aligned}
		f_i (v') &= g_i (v') + h_i (v') \\
		& =  g_\merged (v') + h_i (v') \\ 
		& \leq g_\merged (v') + h_\merged (v') = f_\merged (v').
		\end{aligned}
		\end{equation*} Since $v'$ is on the optimal path $P_\merged$, $f_\merged (v') = f_\merged (s)$, which completes the proof. 
	\end{proof}

	\begin{theorem}
		With admissible heuristisc $h_i(s)$, $i\in \indices$, MAMS-A* terminates in a finite number of iterations by finding an optimal solution, if one exists, in the merged graph $\graph_\merged$.
	\end{theorem}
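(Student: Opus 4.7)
The plan is to mirror the classical A* completeness/optimality argument in \cite{Hart}, but with the two reasoning tools the paper has just assembled: Corollary~1, which guarantees that any removal of coarse vertices is recoverable through fine vertices present in $\vertices_\merged$, and Corollary~2, which guarantees that so long as the algorithm has not terminated, \emph{some} agent still holds (in its \textit{OPEN} list or in transit) a vertex on the optimal merged path $P_\merged$ whose $f$-value is bounded by $f_\merged(s)$. I would organize the proof in three stages: finite termination, validity of what is expanded as a goal upon termination, and optimality of the returned path.

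For termination, I would observe that $\vertices_\merged \subseteq \nodes$ is finite and that each agent's local vertex set is likewise drawn from $\nodes$. A vertex can be reopened only when \textsf{Adopt} strictly lowers its $g$-value (line~\ref{ap:msg update} of Algorithm~\ref{a:adopt}, guarded by the test $g_i(v) > g_j(s)$ in Algorithm~\ref{a:processmsg}), and since edge costs are positive and $g$-values are bounded below by $0$, each vertex can be reopened only finitely many times per agent. Since \textsf{Publish} is triggered only by \textsf{Expand}, the number of messages each agent generates is therefore finite, and so the termination condition in line~\ref{ap:terminate_cond} of Algorithm~\ref{a:search} is eventually met.

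For optimality, I would argue by contradiction. Suppose the algorithm terminates with a returned path of cost $C > f_\merged(s)$. Consider the instant just before the final agent becomes inactive. At that instant the algorithm has not yet terminated, so Corollary~2 furnishes some agent $\agent_i$ holding either an open vertex $v'$ or a pending message $\langle v', g_j(v'), h_j(v')\rangle$ on $P_\merged$ with $f_i(v') \le f_\merged(s) < C$. In the first case, $\agent_i$ cannot be inactive because its \textit{OPEN} list is nonempty and, by the A* expansion rule, it would extract $v'$ before any vertex of $f$-value $C$. In the second case, processing the pending message calls \textsf{Adopt}, which reactivates $\agent_i$ via line~\ref{ap:reactivate} of Algorithm~\ref{a:adopt}. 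Either way, not all agents can simultaneously be inactive, contradicting termination. Hence $C = f_\merged(s)$. Completeness (existence of the returned path whenever one exists in $\graph_\merged$) follows from the same corollary applied to any $v'$ on $P_\merged$: Corollary~1 ensures that removals in line~\ref{al:updategraph} never sever connectivity of $P_\merged$ across the union of agents, because the removed coarse vertex is always replaced by a collection of fine vertices that some agent must expand and broadcast.

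The step I expect to be most delicate is verifying that termination corresponds to the \emph{fine} goal of $\graph_\merged$ having been expanded, rather than to all agents becoming stuck on stale coarse goal vertices. The paper's inactivation/reactivation mechanism is designed precisely for this: if $\agent_i$ expands a coarse goal $u_i$ with $H(u_j) \subset H(u_i)$ for some finer $u_j \in \vertices_\indices$, then by Corollary~1 a fine covering exists inside $\vertices_\merged$, and some agent will eventually publish a finer vertex inside $H(u_i)$; processing that message either removes $u_i$ (case of line~\ref{al:updategraph}) or reactivates $\agent_i$ on a finer descendant of $u_i$ (case of line~\ref{al:localupdate}), so the agent cannot remain permanently inactive unless a vertex with the same hypercube as the merged fine goal has in fact been expanded. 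Making this loop-free termination precise, together with the standard care needed when an agent is reactivated after closure, is where I would spend the most effort; the remaining A*-style bookkeeping follows routinely.
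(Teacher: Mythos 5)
Your proposal is correct in substance and follows essentially the same route as the paper: a contradiction argument in which Corollary~\ref{coro: the corollaray} supplies, just before termination, an agent holding a vertex $v'$ on $P_\merged$ with $f_i(v')\le f_\merged(s)$ (the paper's case~3), Corollary~\ref{coro: baby corollaray} guarantees that removals of coarse vertices never permanently disconnect $P_\merged$ (the paper's case~2), and finiteness of the vertex sets yields termination. Two local justifications are weaker than the paper's, though both are easily repaired. First, your claim that each vertex is reopened finitely often ``since edge costs are positive and $g$-values are bounded below by $0$'' does not follow as stated: a strictly decreasing real sequence bounded below by zero can still be infinite. The paper instead invokes the finite number of non-cyclic paths from $s$ to any vertex, so the $g$-values attainable at a vertex form a finite set and strict decreases are finitely many; you need this (or an equivalent) to close the step. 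Second, ``$\agent_i$ cannot be inactive because its \textit{OPEN} list is nonempty'' misreads Line~\ref{al:inactivate} of Algorithm~\ref{a:expand}: an agent is also inactivated by expanding a goal vertex, regardless of its \textit{OPEN} list. Your subsequent clause---that the A* expansion rule would have extracted $v'$ with $f_i(v')<C$ before any vertex of $f$-value $C$, hence before the goal expansion that inactivates the agent---is the correct argument and is exactly what the paper uses, so the slip is in phrasing rather than in the underlying logic.
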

	
	\begin{proof}
		We prove this theorem by contradiction. Suppose the algorithm does not terminate by finding an optimal path to a goal vertex in the merged graph $\graph_\merged$. There are three cases to consider:
		\begin{enumerate}[leftmargin=*]
			\item \textit{The algorithm terminates at a non-goal. } This contradicts the termination condition (Line~\ref{ap:terminate_cond} of Algorithm \ref{a:search}) since the agents become inactive only if they expand a goal vertex or the $\textit{OPEN}$ list is empty (Line~\ref{al:inactivate} of Algorithm \ref{a:expand}). At least one agent has non-empty $\textit{OPEN}$ if a goal vertex has not been expanded.
			
			\item \textit{The algorithm fails to terminate.} Since there is a finite number of non-goal vertices, a finite number of agents, and a finite number of non-cyclic paths from the start vertex $s$ to any vertex $v$ with non-negative edge cost, a vertex will be closed forever by all agents or removed permanently from the search space by Line~ \ref{al:updategraph} of Algorithm \ref{a:processmsg}. Hence, the only possibility left for an agent to remain active without reaching the goal vertex is when the graph is disconnected along the path from the initial vertex to the goal vertex by Line~\ref{al:updategraph-remove} of Algorithm \ref{a:processmsg}. 	
			Suppose a coarse vertex $u$ of $\graph_{i}$ was removed by Line~\ref{al:updategraph-remove} of Algorithm~\ref{a:processmsg}.  
			By Corollary~$\ref{coro: baby corollaray}$, there exists a set $W$ of fine vertices such that $H(W) = H(u)$. 
			Hence, if $H(u)$ contains the part of the path to the goal, the vertices of $W$ will be added to graph $\graph_{i}$ 
			by Line~\ref{al:local addition} of Algorithm~\ref{a:processmsg}. 
			Moreover, any vertex $w$ in $W$ cannot be removed because it is a fine vertex. 
			Hence,  a path to the goal vertex that may have become 
			disconnected by the removal of the coarse vertex $u$ will be eventually restored by the set of the fine vertices in $W$.  
			This contradicts the assumption that the algorithm failed to terminate.

			\item \textit{The algorithm terminates at a fine resolution goal without achieving the minimum cost in the merged abstraction $\merged$.} 
			Suppose the algorithm terminates at some goal vertex $v$ with 
			$f_j (v) > f_\merged (v)$. 
			By Corollary \ref{coro: the corollaray}, just before termination, there existed an agent $\agent_i$ which had an open node $v'$, or had an incoming message containing $v'$, such that $v'$ is on an optimal path and $f_i(v') \leq f_\merged(s)$. Thus, at this stage, $v'$ would have been selected for expansion rather than $v$, or at least one agent would have been reactivated by the message containing $v'$, contradicting the assumption the algorithm terminated without achieving the minimum cost.
		\end{enumerate}
	\end{proof}

	\section{Discussion}  

	Note that since the attained solution does not necessarily imply resolution completeness, it may be prudent to alleviate the computational burden of message broadcasting by sequentially sending messages from one agent to another at the expense of optimality.
	Let the sequence of agents be $(\agent_1, \agent_2, \ldots, \agent_n)$, such that $\agent_1$ passes its expansion result to $\agent_2$, and $\agent_2$ passes it to $\agent_3$, and so forth. 
	The agents will not loose any information that has been deemed optimal by the previous agents, and therefore the next agent $\agent_{i+1}$ will always make decisions based on augmented information provided by the previous agent $\agent_i$, resulting in an optimal path in the partially merged graph. Nonetheless, the last agent $\agent_n$'s path may not be necessarily the same as $\Path_\merged$, since a segment of $\Path_\merged$ could be ignored by an agent $\agent_i$ without the information of some agent $\agent_{j}$ for $j>i$, and thus this segment may not be passed onto the next agent $\agent_{i+1}$. 
	However, in our experiments the results were shown to be very close to the optimal one.    
	
	We also adopted the backtracking algorithm presented in \cite{Hauer} to incorporate a further refinement scheme for resolution completeness, i.e., to find a feasible path in the fine resolution space. 
	At each iteration, the agent traverses along the abstract solution path and stores only the fine resolution path segment to its memory, then re-solves the shortest path problem at a new vertex with different abstraction until the goal vertex is reached.
	If no solution can be found from the current vertex $v$, the agent backtracks to the previous vertex $u$ in the accumulated path and removes the edge $\edge = (u,v)$ from the graph to avoid a cyclic iteration. 
	The proof of completeness is omitted for brevity, and instead we refer the reader to \cite{Hauer}.
	Note that it is desirable to backtrack as early as possible, however this depends on the environment and the choice of $\alpha$. 
	A larger value of $\alpha$ makes the algorithm easier to backtrack earlier, as the agent attains more information far away from the current location than with a smaller value of $\alpha$. 
	However, the cardinality of the search space also increases with increasing $\alpha$, and therefore the right granularity of the abstract graph for optimal performance is not known a priori.   
	
	\section{Numerical Results}

	The MAMS-A* algorithm was implemented within the Robot Operating System (ROS) framework for modularized agents. 
	In a single workstation, multiple ROS nodes were generated to solve a single query planning problem cooperatively. 
	The communication among the nodes was made via ROS messaging. 
	
	Different number of agents were generated at different locations including the start and goal positions in a 2D labeled map; see Figure~\ref{f:result_3a}.
	In our experiments, we chose $\lambda_1$ and $\lambda_2$ to be 0.999 and 0.001, respectively, and we set $V(n_{k,p})\in [0,1]$ to be the risk level averaged over the region $\hypercube(n_{k,p})$.
	Different values of $\alpha$ were chosen with varying size of the search space represented as depth, e.g., depth 7 corresponds to $2^{7d}$ nodes. 
	Each result was then normalized by the resolution complete solution computed with a regular A* in the corresponding depths. 
	Figure~\ref{f:instance_result} shows the result after one iteration.  
	\begin{figure}[thpb]
		\centering
		\includegraphics[width=0.46\textwidth]{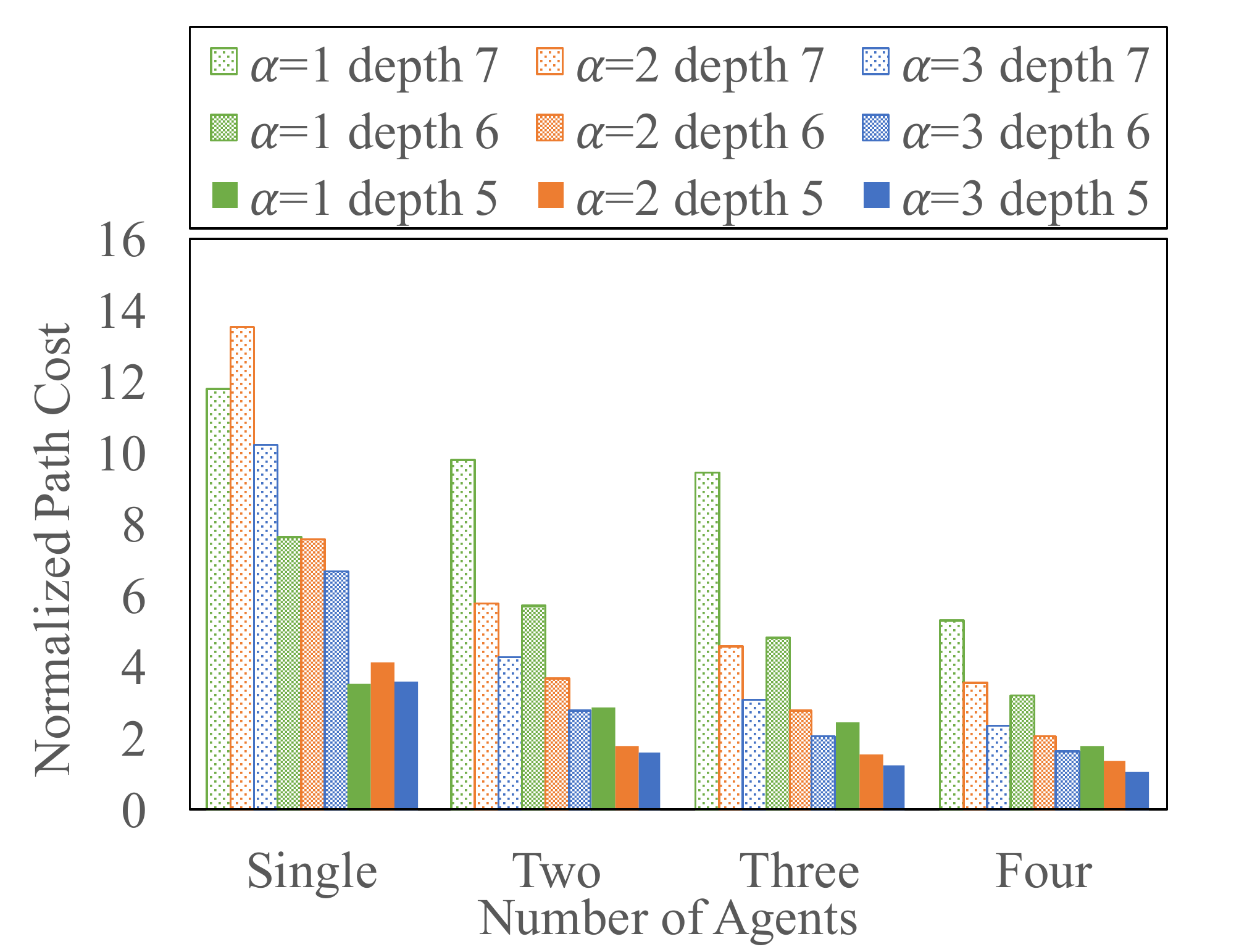}
		\includegraphics[width=0.46\textwidth]{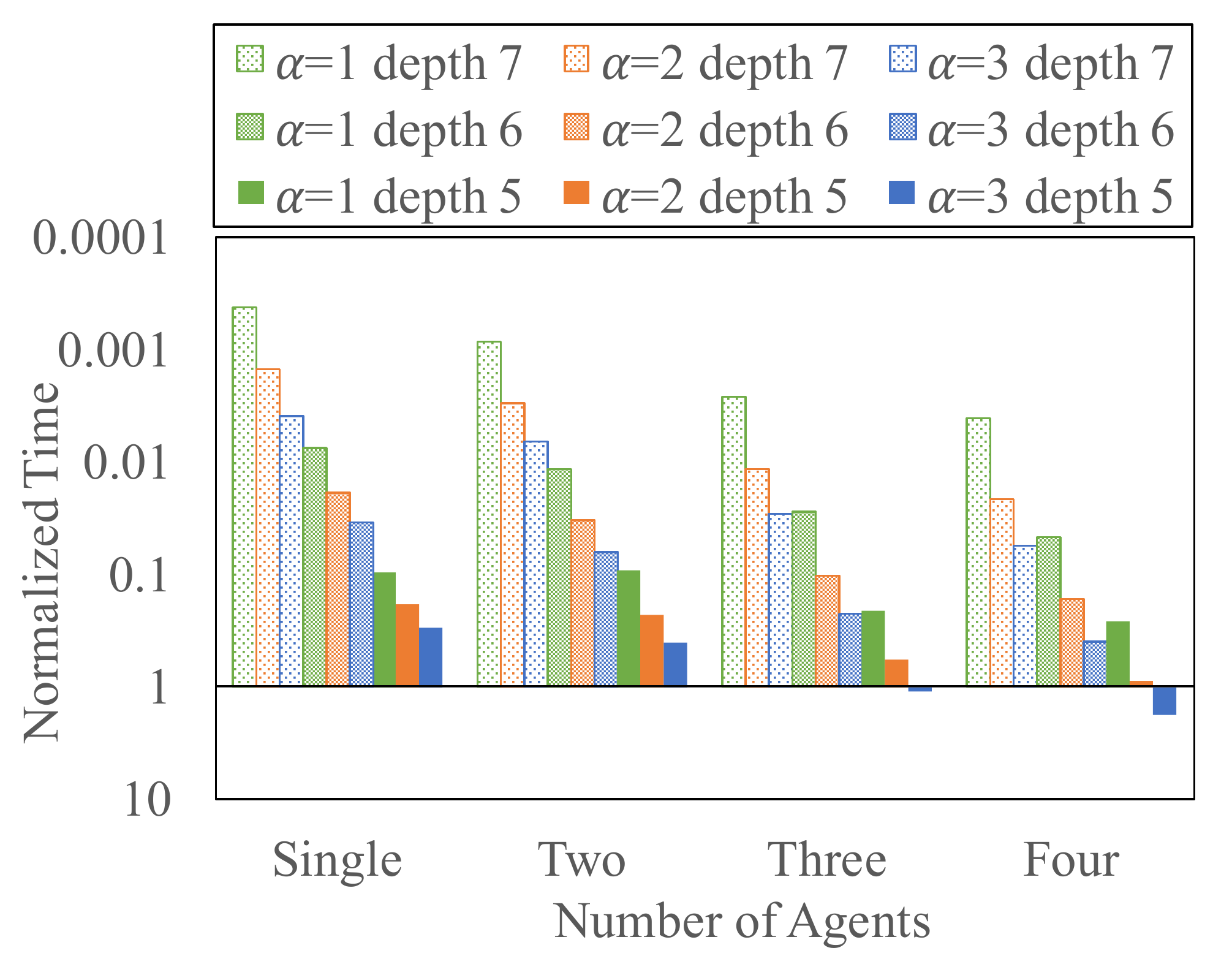}
		\caption{Computation time and solution cost of abstract path for different number of agents and parameters normalized by regular A* result. Left: normalized path cost. Right: normalized time.}
		\label{f:instance_result}
	\end{figure}
	
	The computational advantage of the proposed algorithm is most prominent when the original search space is large and $\alpha$ is small (e.g., $\alpha =1$ depth 7), as the abstraction reduces the search space most significantly. 
	The algorithm finds an abstract path three orders of magnitude faster compared to the regular A*. 
	As we penalize the abstraction in the cost function defined in equation ($\ref{e:cost}$), the cost of the abstracted path is substantially worse compared to the resolution optimal solution especially for the single agent case. 
	The increased number of agents improves the solution quality of the abstract path.  
	
	A similar comparison was made and plotted in Figure~\ref{f:holistic_result}, but using instead the backtracking algorithm to solve for a fine resolution path. 
	At each iteration, only one agent was allowed to move, while the other agents remained stationary. This was done to demonstrate the application of the algorithm for cooperative agents with different goal locations. 
	
	\begin{figure}[thpb]
		\centering
		\includegraphics[width=0.46\textwidth]{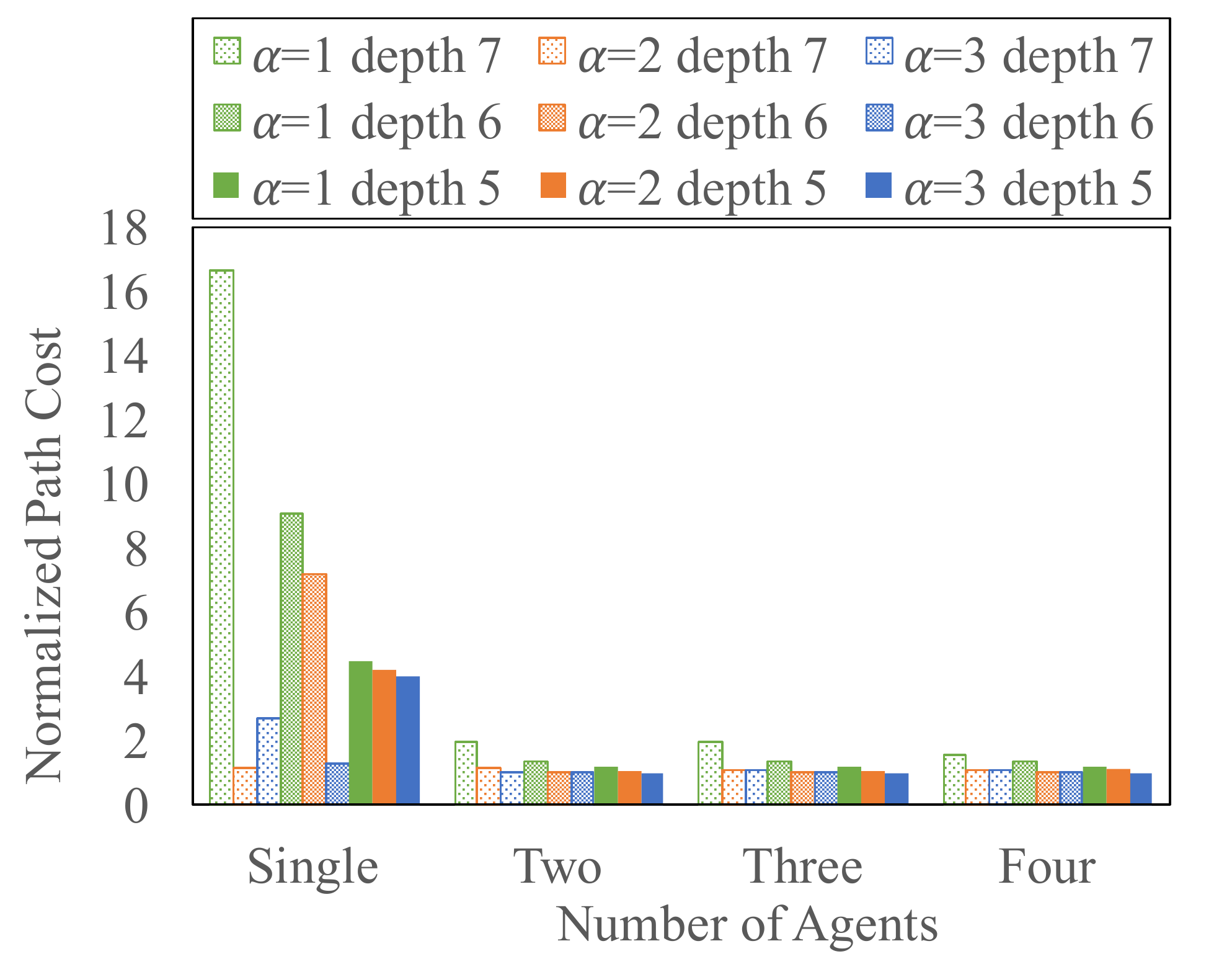}
		\includegraphics[width=0.46\textwidth]{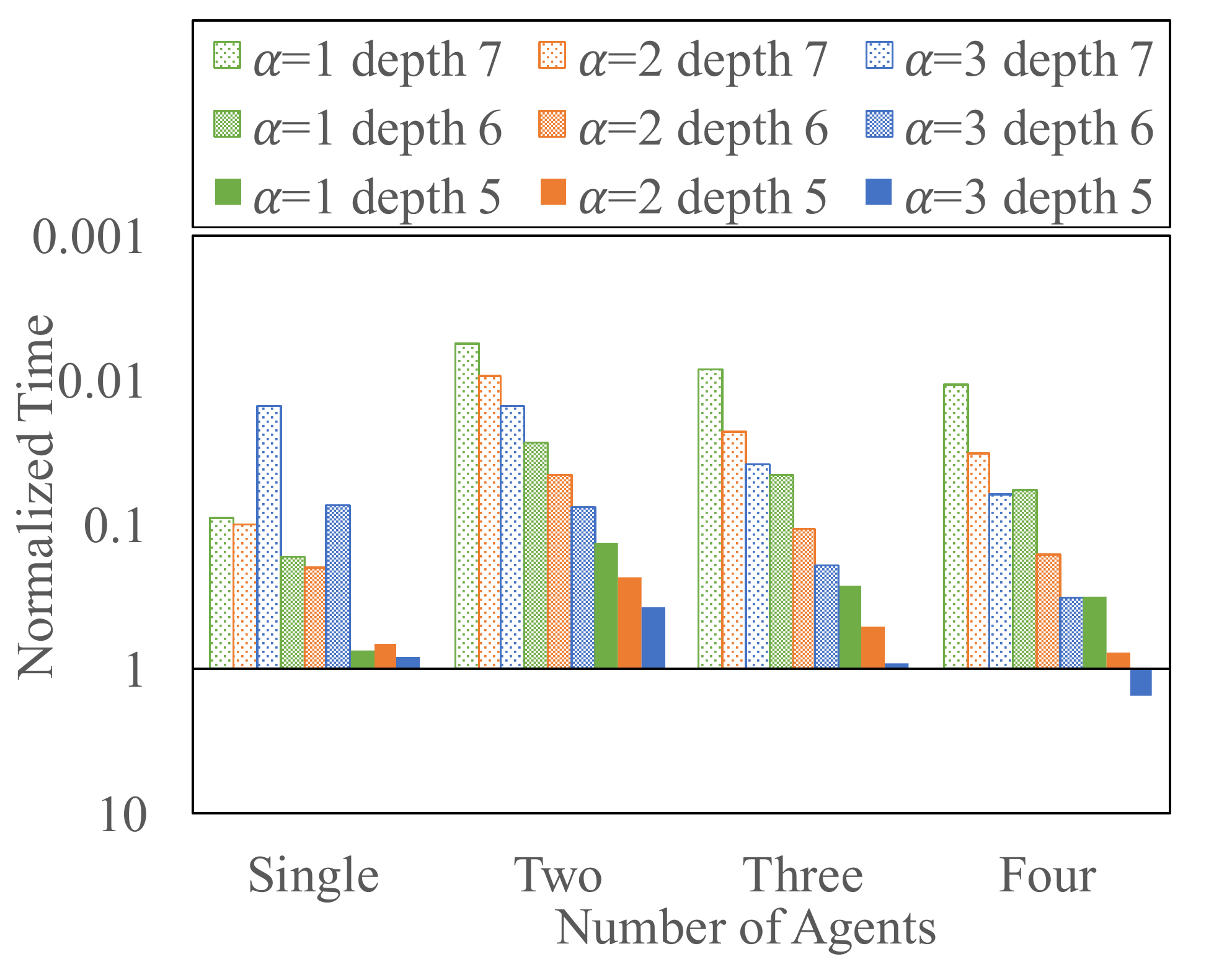}
		\caption{Computation time and solution cost of fine path for different number of agents and parameters normalized by regular A* result. 
		Left: normalized path cost. Right: normalized time.}
		\label{f:holistic_result}
	\end{figure}
	
	In the single agent case, the performance was highly sensitive to the choice of $\alpha$.
	At some abstraction structure, the single agent could still find near optimal solutions, but in the worst case, the path length was 16 times longer than the optimal. In contrast, as a better heuristic for refinement was used, the multi-agent cases resulted in more consistent computation times and cost performance that was
	less sensitive to the particular choice of $\alpha$.   
	
	\section{Conclusion}
	
	In this paper, we propose a new algorithm to solve a single query shortest path planning problem using multiple multi-resolution graphs representing the same search space.
	The solution quality and the speeds up from abstraction 
	is balanced efficiently by using multiple agents distributed in the search space, as they communicate only the expansion result of A* to avoid unnecessary communication. 
	The completeness and  optimality of the algorithm are shown. 
	The proposed scheme was applied to a backtracking algorithm and demonstrated the advantages of using selective and distributed information provided by other agents for refinement heuristics in terms of computational time and solution quality.

	\section*{Acknowledgement}
	
	This work has been supported by ARL under DCIST CRA W911NF-17-2-0181.
	
	\bibliographystyle{bib/IEEEtran}
	\bibliography{bib/mams}

\begin{thebibliography}{10}
\providecommand{\url}[1]{#1}
\csname url@rmstyle\endcsname
\providecommand{\newblock}{\relax}
\providecommand{\bibinfo}[2]{#2}
\providecommand\BIBentrySTDinterwordspacing{\spaceskip=0pt\relax}
\providecommand\BIBentryALTinterwordstretchfactor{4}
\providecommand\BIBentryALTinterwordspacing{\spaceskip=\fontdimen2\font plus
\BIBentryALTinterwordstretchfactor\fontdimen3\font minus
  \fontdimen4\font\relax}
\providecommand\BIBforeignlanguage[2]{{%
\expandafter\ifx\csname l@#1\endcsname\relax
\typeout{** WARNING: IEEEtran.bst: No hyphenation pattern has been}%
\typeout{** loaded for the language `#1'. Using the pattern for}%
\typeout{** the default language instead.}%
\else
\language=\csname l@#1\endcsname
\fi
#2}}

\bibitem{Marthi}
B.~Marthi, S.~Russell, and J.~Wolfe, ``Angelic hierarchical planning: optimal
  and online algorithms (revised).'' University of California, Berkeley''
  Technical Report, 2009.

\bibitem{kiebel2008hierarchy}
S.~J. Kiebel, J.~Daunizeau, and K.~J. Friston, ``A hierarchy of time-scales and
  the brain,'' \emph{{PLoS} Computational Biology}, vol.~4, no.~11, p.
  e1000209, 2008.

\bibitem{Nissim2}
M.~Helmert, P.~Haslum, J.~Hoffmann, and R.~Nissim, ``Merge-and-shrink
  abstraction: A method for generating lower bounds in factored state spaces,''
  \emph{Journal of the ACM}, vol.~3, 05 2014.

\bibitem{Kamb}
S.~Kambhampati and L.~S. Davis., ``Multiresolution path planning for mobile
  robots,'' \emph{IEEE Journal of Robotics and Automation}, pp. 135--145, 1986.

\bibitem{Pai2}
D.~K. {Pai} and L.~. {Reissell}, ``Multiresolution rough terrain motion
  planning,'' \emph{IEEE Transactions on Robotics and Automation}, vol.~14,
  no.~1, pp. 19--33, Feb 1998.

\bibitem{Holte}
R.~Holte, M.~Perez, R.~Zimmer, and A.~MacDonald, ``Hierarchical {A}*: Searching
  abstraction hierarchies efficiently.'' in \emph{AAAI}, 1996, p. 530–535.

\bibitem{Botea}
A.~Botea, M.~M\"{u}ller, and J.~Schaeffer, ``Near optimal hierarchical
  path-finding,'' \emph{Journal of Game Development}, vol.~1, pp. 7--28, 2004.

\bibitem{Simon}
H.~A. Simon, ``The architecture of complexity,'' \emph{Proceedings of the
  American Philosophical Society}, vol. 106, no.~6, pp. 467--482, 1962.

\bibitem{Hauer}
F.~{Hauer}, A.~{Kundu}, J.~M. {Rehg}, and P.~{Tsiotras}, ``Multi-scale
  perception and path planning on probabilistic obstacle maps,'' in \emph{IEEE
  International Conference on Robotics and Automation (ICRA)}, Seattle, WA, May
  2015, pp. 4210--4215.

\bibitem{Jung}
D.~Jung, ``Hierarchical path planning and control of a small fixed-wing uav:
  theory and experimental validation.'' Ph.D. dissertation, Georgia Institute
  of Technology, Atlanta, GA, 2007.

\bibitem{Cowlagi}
R.~V. {Cowlagi} and P.~{Tsiotras}, ``Multiresolution motion planning for
  autonomous agents via wavelet-based cell decompositions,'' \emph{IEEE
  Transactions on Systems, Man, and Cybernetics, Part B (Cybernetics)},
  vol.~42, no.~5, pp. 1455--1469, Oct 2012.

\bibitem{Tsiotras}
P.~{Tsiotras} and E.~{Bakolas}, ``A hierarchical on-line path planning scheme
  using wavelets,'' in \emph{2007 European Control Conference (ECC)}, Kos,
  Greece, July 2007, pp. 2806--2812.

\bibitem{Otte}
M.~{Otte} and N.~{Correll}, ``{C-FOREST:} parallel shortest path planning with
  superlinear speedup,'' \emph{IEEE Transactions on Robotics}, vol.~29, no.~3,
  pp. 798--806, June 2013.

\bibitem{LHT:tac11}
Y.~Lu, Y.~Huo, and P.~Tsiotras, ``A beamlet-based graph structure for path
  planning using multiscale information,'' \emph{IEEE Transactions on Automatic
  Control}, vol.~57, no.~5, pp. 1166--1178, May 2012.

\bibitem{Nissim}
R.~Nissim and R.~Brafman, ``Distributed heuristic forward search for
  multi-agent planning,'' \emph{Journal of Artificial Intelligence Research},
  vol.~51, no.~1, pp. 293--332, 2014.

\bibitem{Hart}
P.~E. {Hart}, N.~J. {Nilsson}, and B.~{Raphael}, ``A formal basis for the
  heuristic determination of minimum cost paths,'' \emph{IEEE Transactions on
  Systems Science and Cybernetics}, vol.~4, no.~2, pp. 100--107, July 1968.

\end{thebibliography}
	
\end{document}